\DeclareSymbolFont{frenchscript}{OMS}{ztmcm}{m}{n}
\DeclareMathSymbol{\A}{\mathord}{frenchscript}{65}    
\DeclareMathSymbol{\Ce}{\mathord}{frenchscript}{67}   
\DeclareMathSymbol{\F}{\mathord}{frenchscript}{70}    
\DeclareMathSymbol{\K}{\mathord}{frenchscript}{75}    
\DeclareMathSymbol{\Pow}{\mathord}{frenchscript}{80}  
\DeclareMathSymbol{\Tk}{\mathord}{frenchscript}{84}   
\DeclareMathAlphabet{\mathbbm}{U}{bbm}{m}{n}          
\newcommand{\IT}{\mathbbm{T}}                         
\def\comesfrom{\@transition\leftarrowfill}
\def\goesto{\@transition\rightarrowfill}
\def\ngoesto{\@transition\nrightarrowfill}
\def\Goesto{\@transition\Rightarrowfill}
\def\nGoesto{\@transition\nRightarrowfill}
\def\xmapsto{\@transition\mapstofill}
\def\nxmapsto{\@transition\nmapstofill}
\def\@transition#1{\@@transition{#1}}
\newbox\@transbox
\newbox\@arrowbox
\newbox\@downbox
\def\@@transition#1#2%
\wd\@transbox{#1}
\@transbox\hbox{$\mathop{\box\@arrowbox}\limits^{\box\@transbox}$}
\def\nrightarrowfill{$\m@th\mathord-\mkern-6mu%
  \cleaders\hbox{$\mkern-2mu\mathord-\mkern-2mu$}\hfill
  \mkern-6mu\mathord\not\mkern-2mu\mathord\rightarrow$}
\def\Rightarrowfill{$\m@th\mathord=\mkern-6mu%
  \cleaders\hbox{$\mkern-2mu\mathord=\mkern-2mu$}\hfill
  \mkern-6mu\mathord\Rightarrow$}
\def\nRightarrowfill{$\m@th\mathord=\mkern-6mu%
  \cleaders\hbox{$\mkern-2mu\mathord=\mkern-2mu$}\hfill
  \mkern-6mu\mathord\not\mathord\Rightarrow$}
\def\mapstofill{$\m@th\mathord\mapstochar\mathord-\mkern-6mu%
  \cleaders\hbox{$\mkern-2mu\mathord-\mkern-2mu$}\hfill
  \mkern-6mu\mathord\rightarrow$}
\def\nmapstofill{$\m@th\mathord\mapstochar\mathord-\mkern-6mu%
  \cleaders\hbox{$\mkern-2mu\mathord-\mkern-2mu$}\hfill
  \mkern-6mu\mathord\not\mkern-2mu\mathord\rightarrow$}
\newcommand{\goto}[2][]{\mathrel{\goesto{~#2{\color{red}\;,\;#1}~}}}        
\newtheorem{proposition}{Proposition}
\newtheorem{open}{Open Problem}
\newenvironment{proof}{\begin{trivlist} \item[\hspace{\labelsep}\bf Proof:]}{\hfill $\Box$\end{trivlist}}
\newtheorem{defi}{Definition}
\newenvironment{definition}[1]{\begin{defi} \rm \label{df:#1} }{\end{defi}}
\newenvironment{definitionA}[2]{\begin{defi}[#1] \rm \label{df:#2} }{\end{defi}}
\newcommand{\df}[1]{Definition~\ref{df:#1}}
\newcommand{\weg}[1]{}                                
\newcommand{\plat}[1]{\raisebox{0pt}[0pt][0pt]{#1}}   
\newcommand{\den}[1]{\llbracket#1\rrbracket}          
\newcommand{\dcup}{\stackrel{\mbox{\huge .}}{\cup}}   
\newcommand{\cT}{{\rm T}}                             
\newcommand{\J}{{\rm J}}                              
\newcommand{\Tr}{\textit{Tr}}                         
\newcommand{\source}{\textit{source\/}}               
\newcommand{\target}{\textit{target\/}}               
\newcommand{\comp}{\textit{comp\/}}                   
\newcommand{\npc}{\textit{npc\/}}                     
\newcommand{\afc}{\textit{afc\/}}                     
\newcommand{\Left}{\textsc{l}}                        
\newcommand{\R}{\textsc{r}}                           
\newcommand{\aconc}{\mathrel{\mbox{$\smile\hspace{-.95ex}\raisebox{2.5pt}{$\scriptscriptstyle\bullet$}$}}}
\newcommand{\naconc}{\mathrel{\mbox{$\,\not\!\smile\hspace{-.95ex}\raisebox{2.5pt}{$\scriptscriptstyle\bullet$}$}}}
\newcommand{\nconc}{\,\not\!\smile}                   
\def\titlerunning{Ensuring Liveness Properties of Distributed Systems}
\title{\titlerunning:\linebreak[1] Open Problems\vspace{-2pt}}
\author{Rob van Glabbeek
\institute{Data61, CSIRO, Sydney, Australia}
\institute{School of Computer Science and Engineering,
University of New South Wales, Sydney, Australia}
\email{rvg@cs.stanford.edu}
}
\begin{document}
\maketitle

\begin{abstract}
Often fairness assumptions need to be made in order to establish
liveness properties of distributed systems, but in many situations
they lead to false conclusions.

This document presents a research agenda aiming at laying the
foundations of a theory of concurrency that is equipped to ensure
liveness properties of distributed systems without making fairness
assumptions.  This theory will encompass process algebra, temporal
logic and semantic models.
The agenda also includes the development of a methodology and tools
that allow successful application of this theory to the specification,
analysis and verification of realistic distributed systems.

Contemporary process algebras and temporal logics fail to make
distinctions between systems of which one has a crucial liveness
property and the other does not, at least when assuming \emph{justness},
a strong progress property, but not assuming fairness.
Setting up an alternative framework involves giving up on identifying
strongly bisimilar systems, inventing new induction principles,
developing new axiomatic bases for process algebras and new congruence
formats for operational semantics, and creating matching treatments of time
and probability.

Even simple systems like fair schedulers or mutual exclusion protocols
cannot be accurately specified in standard process algebras (or Petri
nets) in the absence of fairness assumptions.  Hence the work involves
the study of adequate language or model extensions, and their
expressive power.
\end{abstract}

\section{State-of-the-art}\label{state-of-the-art}
\advance\textheight 13.6pt
\advance\textheight 13.6pt

\subsection{Specification, analysis and verification of distributed systems}

At an increasing rate, humanity is creating distributed systems
through hardware and software---systems consisting of multiple
components that interact with each other through message passing or
other synchronisation mechanisms.  Examples are distributed databases,
communication networks, operating systems, industrial control systems,
etc. Many of these systems are hard to understand, yet vitally
important. Therefore, significant effort needs to be made to
ensure their correct working.

Formal methods are an indispensable tool towards that end.
They consist of specification formalisms to unambiguously capture the
intended requirements and behaviour of a system under consideration, tools and
analysis methods to study and reason about vital properties of the
system, and mathematically rigorous methods to verify that (a) a
system specification ensures the required properties, and (b) an
implementation meets the specification.

The standard alternative to formal specification formalisms are descriptions
in English, or other natural languages, that try to specify the
requirements and intended workings of a system. History has shown,
almost without exception, that such descriptions are riddled with
ambiguities, contradictions and under-specification. 
Formalisation of such a description---regardless in which
formalism---is the key to elimination of these holes.

A formal specification of a distributed system typically comes in (at
least) two parts.

One part formulates the \emph{requirements}
imposed on the system as a list of properties the system should
have. Amongst the formalisms to specify such requirements are temporal
logics like Linear-time Temporal Logic (LTL) \cite{Pn77} or Computation Tree Logic
(CTL) \cite{EC82}.  Amongst others, they can specify \emph{safety properties},
saying that something bad will never happen, and \emph{liveness properties},
saying that something good will happen eventually \cite{Lam77}.

The other part is a formal description of how the system ought to
work on an \emph{operational} (= step by step) basis, but abstracting from
implementation details. For distributed systems such accounts
typically consist of descriptions of each of the parallel components,
as well as of the communication interfaces that specify how different
components interact with each other. Languages for giving such formal
descriptions are \emph{system description languages}.  When a
system description language features constants to specify elementary
system activities, and operators (like parallel or sequential
composition) to create more complex systems out of simpler ones, it is
sometimes called a \emph{process algebra}.
Alternatively, operational system descriptions can be rendered in a
model of concurrency, such as Petri nets or labelled transition systems.
Such models are also used to describe the meaning of system
description languages.

Once such a two-tiered formalisation of a system has been provided,
there are two obvious tasks to ensure the correct working of
implementations: (a) guaranteeing that the operational system description
meets the requirements imposed on the system, and (b) ensuring that an
implementation satisfies the specification. The latter task additionally
requires a definition of what it means for an implementation to
satisfy a specification, and this definition should ensure that
any relevant correctness properties that are shown to hold for the
specification also hold for the implementation.

A third type of task is the study of other properties of the
implementation, not implied by the specification.  Examples are
measuring its execution times, when these are not part of the
specification, or its success rate, for operations for which success
cannot be guaranteed and only a best effort is made. Potentially,
these tasks call for applications of probability theory.

Traditional approaches to ensure the correct working of distributed
systems are simulation and test-bed experiments. While these are
important and valid methods for system evaluation, in particular for
quantitative performance evaluation, they have limitations in regards
to the evaluation of basic correctness properties. Experimental
evaluation is resource-intensive and time-consuming, and, even after a
very long time of evaluation, only a finite set of operational
scenarios can be considered---no general guarantee can be given about
correct system behaviour for a wide range of unpredictable deployment
scenarios.
I believe that formal methods help in this regard; they complement
simulation and test-bed experiments as methods for system evaluation
and verification, and provide stronger and more general assurances
about system properties and behaviour.

\subsection{Achievements of process algebra and related formalisms}

\emph{Process algebra} is a family of approaches to the specification,
analysis and verification of distributed systems. Its tools encompass
algebraic languages for the specification of systems (mentioned
above), algebraic laws to reason about system descriptions,
and induction principles to derive behaviours of infinite systems from
those of their finite approximations.

Many industrial size distributed systems have been successfully
specified, analysed and verified in frameworks based on process algebra--\!\!--%
examples can be found through the following links.
Major toolsets primarily based on process algebra include
\href{http://www.cs.ox.ac.uk/projects/fdr/}{FDR} \cite{GRABR14},
\href{http://cadp.inria.fr/}{CADP} \cite{GLMS11},
\href{http://www.win.tue.nl/mcrl2/}{mCRL2} \cite{GM14} and
the \href{http://www.it.uu.se/research/group/mobility/applied/psiworkbench}{Psi-Calculi Workbench} \cite{BGRV15,BJPV11}.
Key methods employed are \emph{model checking} \cite{BK08} to ensure that an
operational system description meets system requirements, and
\emph{equivalence checking} \cite{CS01} 
or \emph{refinement} \cite{Mor94}, 
to show that an operational description of an implementation is equivalent to or
at least as suitable as an operational system specification,
in the sense that it inherits all its relevant good properties.

Additional toolsets primarily based on model checking or other mathematical
techniques that explore the state spaces of distributed systems include
\href{http://spinroot.com/spin/}{SPIN} \cite{SPIN},
\href{http://www.uppaal.org/}{\sc Uppaal} \cite{BDL04},
\href{http://ltsmin.utwente.nl/}{LTSmin} \cite{KLMPBD15},
\href{http://www.prismmodelchecker.org/}{PRISM} \cite{KNP10} and
\href{http://research.microsoft.com/en-us/um/people/lamport/tla/toolbox.html}{TLA} \cite{La02}.
\vspace{-4pt}

\subsection{Liveness and fairness assumptions}\label{liveness}

One of the crucial tasks in the analysis of distributed systems is the
verification of liveness properties, saying that something good will
happen eventually. A typical example is the verification of a communication
protocol---such as the \emph{alternating bit protocol} \cite{Lyn68,BSW69}---that
ensures that a stream of messages is relayed correctly, without loss or
reordering, from a sender to a receiver, while using an unreliable
communication channel.  The protocol works by means of acknowledgements,
and resending of messages for which no acknowledgement is received.

Naturally, no protocol is able to ensure such a thing, unless one
assumes that attempts to transmit a message over the unreliable channel
will not fail in perpetuity. Such an assumption, essentially saying
that if one keeps trying something forever it will eventually succeed, is
often called a \emph{fairness} assumption.
See \cite{GH18} for a formal definition of fairness, and an overview
of notions of fairness from the literature.

Making a fairness assumption is indispensable when verifying the
alternating bit protocol. If one refuses to make such an assumption,
no such protocol can be found correct, and one misses a chance to
check the protocol logic against possible flaws that have nothing to
do with a perpetual failure of the communication channel.

For this reason, fairness assumptions are made in many process-algebraic
verification methods, and are deeply ingrained in their methodology \cite{BBK87a}.
This applies, amongst others, to the default incarnations of the
process algebras CCS \cite{Mi89}, ACP \cite{BW90,Fok00,BBR10}, the
$\pi$-calculus \cite{Mi99,SW01} and mCRL2 \cite{GM14}.
\vspace{-4pt}

\subsection{Fairness assumptions in process algebra}\label{fair abstraction}

\textsl{This section explains the last claim, and can be skipped
  by anyone unfamiliar with these process algebras.}\linebreak[3]
A typical process-algebraic verification of---say---the alternating bit protocol \cite{BK86}
starts from process algebraic descriptions of the specification as well as the implementation.
Each comes as an expression in a suitable process-algebraic specification language,
and is interpreted as a state in a labelled transition system.
The specification describes the intended behaviour of the protocol,
and does not model message loss due to unreliable communication channels.
The implementation describes the parallel composition of the sender,
the receiver, and the unreliable channels for messages and acknowledgements.
All unsuccessful communication attempts manifest themselves as cycles of
internal transitions (labelled with the unobservable action $\tau$)
in the labelled transition system that forms the semantic model of the composed
implementation. The verification consists of showing that the implementation
is semantically equivalent to the specification, using an equivalence
such as \emph{branching bisimulation equivalence} \cite{GW96} or
\emph{weak bisimulation equivalence} \cite{Mi89}. These semantic equivalences
abstract from cycles of internal actions, and this is essential for a successful verification.
The technique of proving liveness properties through abstraction from $\tau$-cycles is
called \emph{fair abstraction} \cite{BK86,BRV96}.
It amounts to applying a particular strong fairness assumption,
called \emph{full fairness} in \cite{GH18}.
In \cite{BK86} the equivalence
of specification and implementation is obtained by algebraic manipulation of
process-algebraic expressions, and here the abstraction from $\tau$-cycles is captured
by \emph{Koomen's Fair Abstraction Rule}. The soundness of this proof rule w.r.t.\
weak bisimulation equivalence is established in \cite{BBK87a}.

A popular alternative to bisimulation equivalence for relating specifications and implementations is the
\emph{must testing equivalence} of \cite{DH84}, which coincides with the \emph{failures equivalence} of CSP
\cite{BHR84,Ho85,Ros97}. It does not embody fairness assumptions.
However, a variant that incorporates fair abstraction has been proposed in \cite{Vo92,BRV95,NC95}.
\pagebreak

\section{The dangers of assuming fairness}\label{dangers}

Using a fairness assumption, however, needs to be done with care.
Making a fairness assumption can lead to patently false results.
This applies for instance to the alternating bit protocol
in cases where one of the possible behaviours of the unreliable
channel is to perpetually lose all messages.

In the study of routing protocols for wireless networks, a desirable
liveness property is \emph{packet delivery} \cite{FGHMPT12a,TR13}.
It says that a data packet injected at a source node will eventually be delivered
at its destination node, provided (a) the source node is connected to the target node
through a sequence of 1-hop connections where each two adjacent nodes
are within transmission range of each other, and (b) no 1-hop connection
in the entire network breaks down before the data packet is delivered.
In a reasonable model of a wireless network all connections between nodes can
nondeterministically appear or disappear in any state. As a consequence,
transitions leading to successful delivery of a packet remain possible
as long as the packet is not delivered, and for this reason the assumption of
full fairness is sufficient to establish packet delivery
even without the side conditions (a) and (b).\footnote{All relevant transitions up to the final delivery can
  be relabelled $\tau$, and this cloud of $\tau$-transitions has only one exit, successful delivery,
  which remains reachable throughout. Full fairness allows abstraction from this cloud of $\tau$-transition.}
Nevertheless, such a result has little bearing on reality in wireless networks.
\newcommand{\defis}{\stackrel{{\it def}}{=}}

I quote an example from \cite{GH18} as another illustration of this phenomenon.\label{AB}
Consider the CCS process $Q:=(X|b)\backslash b$ where \raisebox{0pt}[11.5pt][0pt]{$X \defis a.X + \bar b.X$}.
The syntax and semantics of the process algebra CCS, in which this example is specified,
can be found in Appendix~\ref{CCS}.
{\makeatletter
\let\par\@@par
\par\parshape0
\everypar{}\begin{wrapfigure}[2]{r}{0.385\textwidth}
 \vspace{-4ex}
  \input{phone}
  \centerline{\raisebox{1ex}{\box\graph}}
   \vspace{-10ex}
  \end{wrapfigure}
\noindent
The transition system of this process is depicted on the right.%
\footnote{States are depicted by circles and transitions by arrows between them. An initial state is
  indicated by a short arrow without a source state. When a liveness property
  modelled as a set of goal states is involved, these states are indicated by shading.}
The process $X$ models the behaviour of relentless Alice, who either picks up her phone 
when Bob is calling ($\bar b$), or performs another activity ($a$), such as eating an apple.
In parallel, $b$ models Bob, constantly trying to call Alice; the action $b$ models the 
call that takes place when Alice answers the phone.
A desired (by Bob) liveness property $\mathcal{G}$ is to achieve
a phone connection with Alice, i.e.\ to reach the rightmost state
in the above transition system.
Under each of the notions of strong and weak fairness reviewed in \cite{GH18},
$Q$ satisfies $\mathcal{G}$.
  Yet, it is perfectly reasonable that the connection is never
  established: Alice could never pick up her phone, as she is not in the mood of talking to Bob; maybe she
  totally broke up with him.
  \par}
My last example is taken from \cite{vG19}.
Suppose Bart stands behind a bar and wants to order a beer. But by lack of any formal queueing protocol
many other customers get their beer before Bart does. This situation can be modelled as a transition
system where in each state in which Bart is not served yet there is an outgoing transition
modelling that Bart gets served, but there are also outgoing transitions modelling that someone else
gets served instead. The essence of fairness is the assumption that Bart will get his beer eventually.
Fairness rules out as unfair any execution in which Bart could have gotten a beer any time, but never will.
Yet, this possibility can not be ruled out in reality.

These examples are not anomalies; they describe a default situation.
A fairness assumption says, in essence, that if you try something often enough, you will eventually
succeed. There is nothing in our understanding of the physical universe that supports such a belief.
Fairness assumptions are justified in exceptional situations, the verification of the alternating
bit protocol being a good example. However, by default they are unwarranted.

\subsection{Probabilistic arguments in favour of fairness}

In defence of making a fairness assumption it is sometimes argued that
whenever at some point the probability of success is 0, the success
possibility should not be part of our model of reality. When
infinitely many choices remain that allow success with a fixed positive
probability, with probability 1 success will be achieved eventually.
This argument rests on assumptions on relative probabilities of
certain choices, but is applied to models that abstract from those probabilities.

My counter-arguments are that (1) when abstracting from probabilities
it is quite well possible that a success probability is always
positive, yet quickly diminishing, so that the cumulative success
probability is less than 1, and (2) that in many applications one does
not know whether certain behaviours have a chance of occurring or
not, but they are included in the model nevertheless.
\vspace{-6pt}

\subsection{Demonic choice in reactive systems}\label{demonic}

\begin{figure}
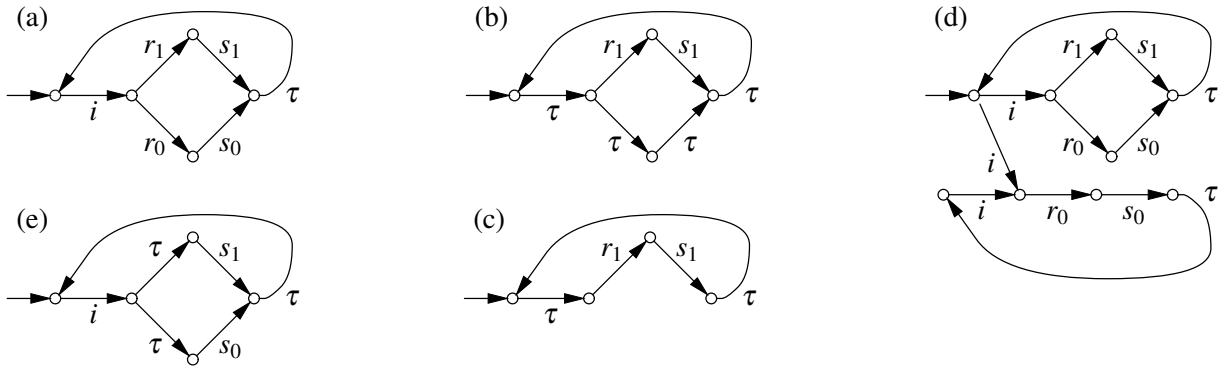

  \input{a}
  \raisebox{1ex}{\box\graph}
\hfill
  \input{b}
  \raisebox{1ex}{\box\graph}
\hfill
  \input{d}
  \raisebox{1ex}{\box\graph}
  \\[-5ex]
  \input{e}
  \raisebox{1ex}{\box\graph}
\hfill
  \input{c}
  \raisebox{1ex}{\box\graph}
\hfill
  \input{f}
  \raisebox{1ex}{\box\graph}
\vspace{-2pt}
\caption{The meaning of choice in reactive systems}\label{1}
\vspace{-2pt}
\end{figure}

The issue is further illustrated by the labelled transition system of Figure \ref{1}(a).
It depicts a system $B$ that, after an initialisation action $i$, reads a Boolean value
on a certain input channel. The action $r_b$ indicates the reading of $b\in\{0,1\}$.
Subsequently it forwards the received value $b$ on a different channel by performing
the action $s_b$. Then, via an internal action $\tau$, $B$ returns to its initial state.
Thus $B$ functions as a one-bit buffer.

Using \emph{fair abstraction}, e.g., standard process algebraic reasoning as described in
Section~\ref{fair abstraction}, one can prove that this system will eventually receive
and forward the Boolean value 1. Namely, one renames all actions one is not interested in into the
internal action $\tau$, thereby obtaining the system of Figure \ref{1}(b). This system is weakly
(and branching) bisimulation equivalent with the system depicted in Figure \ref{1}(c), and the latter
clearly receives and forwards value 1 eventually.

In my view, this is another example showing that the assumption of fairness is by default
unwarranted. For the reactive system $B$ could well be placed in an environment that will never
provide the value 1 on the modelled input channel.

A counterargument has been brought to my attention by one of the referees of this paper.
Namely a transition system as in Figure \ref{1}(a) denotes a system with the property that each
execution of the action $i$ reaches a state from which it is actually possible that action $r_1$ will occur.
This is at odds with my ``placement'' of $B$ in an environment where this is not going to ever
happen. If one wants to model the possible placement of $B$ in such an environment, one
might need a transition system that looks more like Figure \ref{1}(d). Here one sees the possibility that
during the execution of the action $i$, system $B$ falls into an environment that will never provide the
input 1.

In this point of view, my argument against the use of fairness loses its force.
For the system of Figure \ref{1}(d) surely is not guaranteed to ever receive and forward a 1, regardless
whether one employs fairness of not. The system of Figure \ref{1}(a) on the other hand is one that will not
be placed in such an environment, and here the probabilistic argument could be used to argue that
it is very unlikely that one will never observe $r_1$.

In my opinion\pagebreak[4] the modelling of a reactive system ought to be independent of the environment in which it
is going to operate. Figure \ref{1}(d) shows a bad model of $B$, as it may reach a state where it no
longer accepts the value $1$. The better representation is Figure \ref{1}(a), and its meaning should allow
an environment as discussed above.

The example above involves a choice between $r_0$ and $r_1$ that is entirely triggered by the
environment. This is often called an \emph{external choice} \cite{Ros97}. One may wonder if the same
reasoning applies to an \emph{internal} or \emph{nondeterministic} choice. Figure \ref{1}(e) shows a
system that after the action $i$  nondeterministically chooses to either send $0$ or $1$ on its
output channel. Here one may argue that Figure \ref{1}(e) models a system where after each occurrence of $i$
both choices are open, so that there must be a positive probability that $s_1$ will occur.
My argument (1) above that even this does not guarantee that $s_1$ will ever occur,
amounts to saying that the probability of doing  $s_1$  could diminish quickly after each
occurrence of $i$. My referee answers that in this scenario ``the Markovian assumption is being
violated, meaning that another state, with a different choice semantics, is being entered after each choice''.
This could presumably be modelled by an infinite-state transition system, but not by one such as
Figure \ref{1}(e).

My point of view is that a nondeterministic choice such as depicted in Figure \ref{1}(e) is really an
external choice triggered by an aspect of the environment that is outside our grasp.
Either one has chosen to abstract from this aspect, or one does not know what truly
causes the decision. Thus, Figure \ref{1}(e) can be seen as representing the system from  Figure \ref{1}(a)
but with the actions $r_b$ depicted as $\tau$. Consequently, this system might run in
environments were always the choice leading to $r_0$ is taken. It might also run in environments
where an unknown adversary rolls an increasingly unfair dice at each choice point.
This view of nondeterminism is sometimes called \emph{demonic choice} \cite{MM01}.

\section{Progress and justness}\label{pj}

Having reached the point where I advocate proving liveness properties of distributed system without
resorting to fairness assumption, the question arises whether any weaker assumptions in lieu of
fairness are appropriate. My answer is a resounding \emph{yes}. The least that should always be
assumed when proving liveness properties is what I call \emph{progress}. Without a progress
assumption no meaningful liveness properties can be established.

To illustrate this concept, consider the transition system on the right,
{\makeatletter
\let\par\@@par
\par\parshape0
\everypar{}\begin{wrapfigure}[1]{r}{0.285\textwidth}
 \vspace{-5ex}
 \input{Cataline}\label{Cataline}
  \centerline{\box\graph}
 \end{wrapfigure}
\noindent
taken from \cite{vG19}. It models Cataline eating a croissant in Paris
and abstracts from all activity in the world except the eating of that croissant. It thus has two
states only---the states of the world before and after this event---and one transition.
A possible liveness property $\mathcal{G}$ says that the croissant will be eaten. It corresponds
with reaching state $2$. This liveness property does not hold if the system may remain forever in
the initial state $1$. The assumption of progress rules out that behaviour.
In the context of \emph{closed systems}, having no run-time interactions with the environment,
it is the assumption that a system will never get stuck in a state with outgoing transitions.
It is only when assuming progress that $\mathcal{G}$ holds.
\par}

For \emph{reactive systems}, having run-time interactions with their environment, the progress
assumption as formulated above would rule out too many behaviours. Take for instance the one-bit
buffer of Figure \ref{1}(a). When assuming that the system can not get stuck in a state with outgoing
transitions, it would follow that either $s_0$ or $s_1$ will occur. Yet, in real life the system may be
stuck in the state right after $i$, due to the environment not providing any value on the system's
input channel.  In general, a transition may represent an interaction between the distributed
system being modelled and its environment. In many cases it can occur only if both
the modelled system \emph{and} the environment are ready to engage in it. I therefore distinguish
\emph{blocking} and \emph{non-blocking} transitions \cite{GH15a}.%
\footnote{In \cite{TR13} the \emph{internal} and \emph{output} transitions constitute the
  non-blocking ones. In \cite{Rei13} blocking and non-blocking transitions are called \emph{cold}
  and \emph{hot}, respectively.}  A transition is non-blocking if
the environment cannot or will not block it, so that its execution is entirely under the
control of the system under consideration. A blocking transition on the other hand may fail to occur
because the environment is not ready for it. 
In \cite{GH18}, paraphrasing \cite{TR13,GH15a}, the assumption of progress was formulated as follows:
\begin{quote}\it
A (transition) 
system in a state that admits a non-blocking transition will eventually progress, i.e., perform a transition.
\end{quote}
In other words, a run will never get stuck in a state with outgoing non-blocking transitions.

\phantomsection

Justness is an assumption strengthening progress, proposed in \cite{TR13,GH15a,GH18}.
It can be argued that once one adopts progress it makes sense to go a step further and
adopt even justness.

The transition system on the right models Alice making an unending sequence\label{AC}
{\makeatletter
\let\par\@@par
\par\parshape0
\everypar{}\begin{wrapfigure}[5]{r}{0.187\textwidth}
 \vspace{-4ex}
 \input{Alice}
 \centerline{\box\graph}
 \vspace{3ex}
  
 \input{AliceCataline}
 \centerline{\box\graph}
 \end{wrapfigure}
\noindent
of phone calls in London.
There is no interaction of any kind between Alice and Cataline.
Yet, I may choose to abstract from all activity in the world except the eating of the croissant by
Cataline, and the making of calls by Alice.
This yields the combined transition system on the bottom right.
Even when taking the {\it cr}-transition to be non-blocking,
progress is not a strong enough assumption to ensure that Cataline will ever eat the croissant.
For the infinite run that loops in the first state is progressing.
Nevertheless, as nothing stops Cataline from making progress, in reality {\it cr} will occur. \cite{GH18,vG19}
\par}
This example is not a contrived corner case, but a rather typical illustration of an issue that is
central to the study of distributed systems. Other illustrations of this phenomenon occur in
\cite[Section 9.1]{TR13}, \cite[Section 10]{GH15b}, \cite[Section 1.4]{vG15} and \cite[Section 4]{EPTCS255.2}.
The assumption of justness aims to ensure the liveness property occurring in these examples.
In \cite{GH18} it is formulated as follows:
\begin{quote}\it
  Once a non-blocking transition is enabled that stems from a set of parallel components,
  one (or more) of these components will eventually partake in a transition.
\end{quote}
In the above example, {\it cr} is a non-blocking transition enabled in the initial state.
It stems from the single parallel component Cataline of the distributed system under consideration.
Justness therefore requires that Cataline must partake in a transition. This can only be {\it cr},
as all other transitions involve component Alice only. Hence justness says that {\it cr} must occur.
The infinite run starting in the initial state and not containing {\it cr} is ruled out as
unjust.

A formal definition of justness is supplied in Appendix~\ref{justness}.
I believe reading it is not necessary to understand the forthcoming material.

\section{A hierarchy of completeness criteria}\label{sec:hierarchy}

\begin{wrapfigure}[14]{r}{0.48\textwidth}
\input{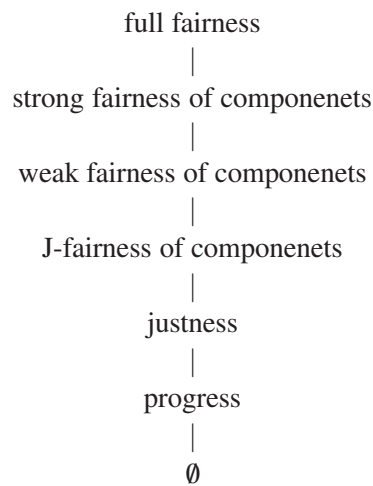}
\vspace{-6ex}
\centerline{\raisebox{1ex}{\box\graph}}
\vspace{1ex}
\caption{A hierarchy of completeness criteria}
\label{fig:hierarchy}
\end{wrapfigure}
In \cite{vG19}, assumptions like progress, justness and fairness are called \emph{completeness criteria}.
They serve to rule out certain runs of distributed systems that appear to be valid in their
representations as transition systems, on grounds that such runs are assumed not to occur in practice.
The completeness criterion ``progress'' for instance, applied to the transition system on
page~\pageref{Cataline}, where {\it cr} is assumed non-blocking, rules out the run in which
Cataline does not eat her croissant (i.e.\ where no transition is ever taken).

One completeness criterion is called \emph{stronger} than another if it rules out more runs.
The weakest completeness criterion is the empty one ($\emptyset$)---it rules out no runs.\pagebreak[2]
Figure~\ref{fig:hierarchy} orders several completeness criteria on strength.
Here progress and justness are described in Section~\ref{pj} and formalised in Appendix~\ref{justness}.\linebreak[2]
The three forms of fairness of components are formally defined in Appendix~\ref{fairness}, taken from \cite{GH18}.
The hierarchy of Figure~\ref{fig:hierarchy} is supported by Proposition~\ref{hierarchy} in Appendix~\ref{fairness}.
Many other notions of fairness are classified in \cite{GH18}; some of them have a strength incomparable to justness.
One notion of fairness from the literature, named \emph{fairness of events} in \cite{GH18}, is shown
to coincide with justness \cite[Theorem 15.1]{GH18}. However, it has been defined only for the
special case that the set of blocking actions is empty. Fairness of events was first studied in
\cite{CS84}---although on a restriction-free subset of CCS where it coincides with fairness of
components---and then in \cite{CDV06a}, under the name fairness~of~actions.

A liveness property holds for a distributed system iff it holds for each of its runs.
It holds under a progress, justness or fairness assumption, or in general when adopting a
particular completeness criterion, if it holds for all runs that are not ruled out by that
assumption or criterion. So a liveness property is more often satisfied when adopting a stronger
completeness criterion.

The concept of \emph{full fairness} described in Section~\ref{fair abstraction} and illustrated in
Section~\ref{demonic} is also formalised in \cite{GH18}. It is not a fairness notion as defined
in Appendix~\ref{fairness}, since it does not rule out a specific set of runs. Yet its strength can be compared with
other notions of fairness based on which liveness properties are obtained by making this assumption.
Full fairness turns out to be the strongest of all possible fairness assumptions \cite{GH18}.

\section{Process algebras without fairness assumptions, and their limitations}\label{main}

\emph{Strong bisimulation equivalence} \cite{Mi89,vG00}, also called \emph{strong bisimilarity}, is
one of the most prominent semantic equivalences considered in concurrency theory.
Virtually all semantic equivalences or refinement preorders employed in process algebra
are coarser than or equal to strong bisimilarity; that is, they identify strongly bisimilar systems.

Here I will argue that these approaches \emph{cannot}
make sufficiently strong progress assumptions to establish meaningful
liveness properties in realistic applications.
Namely, I will show two programs, $P$ and $Q$, that are strongly bisimilar, and hence
equated in virtually all process-algebraic approaches to date.
Yet, there is a crucial liveness property that holds for $P$ but not for $Q$,
when assuming justness but not fairness. So the process algebra must either claim that both
programs have the liveness property, which in case of $Q$ could be an
unwarranted conclusion, possibly leading to the design of systems with
dangerous or catastrophic behaviour, or it falls short in asserting the
liveness property of $P$.
\\[1ex]
\mbox{}\hfill
$x:= 1\quad \| \quad\textbf{repeat}~~ y:=y+1 ~~\textbf{forever}$
\hfill($P$)
\\[1ex]
Program $P$ is the parallel composition of two non-interacting processes,
one of which sets the variable $x$ to $1$, and the other repeatedly
increments a variable $y$. I assume that both variables $x$ and
$y$ are initialised to $0$.\linebreak[3]
The reader may assume that $x$ is a local variable maintained by one component, and $y$ by the
other, or that $x$ and $y$ reside in central memories available to both components; but in the latter
case $x$ and $y$ reside in completely disconnected central memories, so that an access to the
variable $y$ by the right component in no way interferes with an access to variable $x$ by the left one.

\begin{wrapfigure}[6]{r}{5cm}
\vspace{-4.5ex}
\begin{tabbing}\qquad\=\textbf{repeat}\\
\> \quad \= \textbf{case} \\
\>\> \quad \= \textbf{if} ~\texttt{True}~ \textbf{then}~ y:=y+1 ~\textbf{fi}\=\\
\>\>\> \textbf{if} ~$x=0$~ \textbf{then}~ x:=1 ~\textbf{fi}\\
\>\> \textbf{end} \\
\>\textbf{forever}\>\>\>(Q)
\end{tabbing}
\end{wrapfigure}
In program $Q$ the \emph{case}-statement is interpreted such that if the
conditions of multiple cases hold, a non-deterministic choice is made
which one to execute. The conditional write \textbf{if} ~$x=0$~ \textbf{then}~ x:=1 ~\textbf{fi}
describes an atomic read-modify-write (RMW) operation\footnote{\url{https://en.wikipedia.org/wiki/Read-modify-write}}.
Such operators, supported by modern hardware, read a memory location and simultaneously write a new value into it
that may be a function of the previous value.

The programs $P$ and $Q$ are strongly bisimilar; both can be
represented by means of the following labelled transition system:
\vspace{-3ex}

\expandafter\ifx\csname graph\endcsname\relax
   \csname newbox\expandafter\endcsname\csname graph\endcsname
\fi
\ifx\graphtemp\undefined
  \csname newdimen\endcsname\graphtemp
\fi
\expandafter\setbox\csname graph\endcsname
 =\vtop{\vskip 0pt\hbox{%
    \special{pn 8}%
    \special{ar 2000 450 50 50 0 6.28319}%
    \special{sh 1.000}%
    \special{pn 1}%
    \special{pa 2118 353}%
    \special{pa 2035 415}%
    \special{pa 2079 321}%
    \special{pa 2118 353}%
    \special{fp}%
    \special{pn 8}%
    \special{pa 2035 485}%
    \special{pa 2250 750}%
    \special{pa 2550 750}%
    \special{pa 2550 150}%
    \special{pa 2250 150}%
    \special{pa 2042 407}%
    \special{sp}%
    \graphtemp=.5ex
    \advance\graphtemp by 0.450in
    \rlap{\kern 2.930in\lower\graphtemp\hbox to 0pt{\hss $y:=y+1$\hss}}%
    \special{ar 1000 450 50 50 0 6.28319}%
    \special{sh 1.000}%
    \special{pn 1}%
    \special{pa 1025 300}%
    \special{pa 1000 400}%
    \special{pa 975 300}%
    \special{pa 1025 300}%
    \special{fp}%
    \special{pn 8}%
    \special{pa 1000 0}%
    \special{pa 1000 300}%
    \special{fp}%
    \special{sh 1.000}%
    \special{pn 1}%
    \special{pa 1850 425}%
    \special{pa 1950 450}%
    \special{pa 1850 475}%
    \special{pa 1850 425}%
    \special{fp}%
    \special{pn 8}%
    \special{pa 1050 450}%
    \special{pa 1850 450}%
    \special{fp}%
    \graphtemp=\baselineskip
    \multiply\graphtemp by -1
    \divide\graphtemp by 2
    \advance\graphtemp by .5ex
    \advance\graphtemp by 0.450in
    \rlap{\kern 1.500in\lower\graphtemp\hbox to 0pt{\hss $x:=1$\hss}}%
    \special{sh 1.000}%
    \special{pn 1}%
    \special{pa 904 331}%
    \special{pa 965 415}%
    \special{pa 872 369}%
    \special{pa 904 331}%
    \special{fp}%
    \special{pn 8}%
    \special{pa 965 485}%
    \special{pa 650 750}%
    \special{pa 350 750}%
    \special{pa 350 150}%
    \special{pa 650 150}%
    \special{pa 957 408}%
    \special{sp}%
    \graphtemp=.5ex
    \advance\graphtemp by 0.450in
    \rlap{\kern 0.000in\lower\graphtemp\hbox to 0pt{\hss $y:=y+1$\hss}}%
    \hbox{\vrule depth0.717in width0pt height 0pt}%
    \kern 2.930in
  }%
}%

\centerline{\box\graph}
\vspace{3ex}

As a warm-up exercise, one may ask whether the variable $y$ in $P$ or
$Q$ will ever
reach the value $7$---a liveness property. A priori, I cannot give a
positive answer, for one can imagine that after incrementing $y$ three
times, the program for no apparent reason stops making
progress and does not get around to any further activity. In most
applications, however, it is safe to assume that this scenario will
not occur. To accurately describe the intended behaviour of $P$ or $Q$,
or any other program, one makes a progress assumption as described in Section~\ref{pj},
saying that if a program is in a state where further activity is
possible (and this activity is not contingent on input from the
environment that might fail to occur) some activity will in fact
happen. This assumption is sufficient to ensure that in $P$ or $Q$ the
variable $y$ will at some point reach the value $7$.

Progress assumptions are commonplace in process algebra and many other
formalisms. They are explicitly or implicitly made in CCS, ACP, the
$\pi$-calculus, CSP, etc., whenever such formalisms are employed to
establish liveness properties. Temporal logics, such as LTL
\cite{Pn77} and CTL \cite{EC82}, have progress assumptions built in,
namely by disallowing states without outgoing transitions and evaluating temporal formulas by
quantifying over infinite paths only; they can formalise the statement that $y$ will in fact
reach the value $7$. 

A more interesting question is whether $x$ will ever reach the value $1$.
This liveness property is \emph{not} guaranteed by progress
assumptions as made in any of the standard process algebras or temporal logics.
The problem is that all these formalisms rest on a model of
concurrency where parallel composition is modelled as arbitrary interleaving.
The programs $P$ and $Q$ have computations like
$$\begin{array}{lllllll}
x:=1;& y:=y+1;& y:=y+1;& y:=y+1;& y:=y+1;& y:=y+1;& \dots \\
y:=y+1;& x:=1;& y:=y+1;& y:=y+1;& y:=y+1;& y:=y+1;& \dots \\
y:=y+1;& y:=y+1;& y:=y+1;& y:=y+1;& x:=1;& y:=y+1;& \dots \\
\end{array}$$
where the action $x:=1$ can be scheduled arbitrary far in the sequence
of $y$-incrementations, but also a computation
\begin{equation}\tag{$C^\infty$}
\begin{array}{lllllll}
y:=y+1;& y:=y+1;& y:=y+1;& y:=y+1;& y:=y+1;& y:=y+1;& \dots \\
\end{array}
\end{equation}
in which $x:=1$ never happens, because $y:=y+1$ is always scheduled instead.
For this reason, temporal logic as well as process algebra---when not
making fairness assumptions---say that $x$ is not guaranteed to reach
the value $1$, regardless whether talking about $P$ or $Q$.

\begin{wrapfigure}[10]{r}{6.2cm}
\vspace{-3ex}\hfill
\begin{tabular}[t]{c|cc|cc}
  Liveness goal: & \multicolumn{2}{c|}{$y=7$} & \multicolumn{2}{c}{$x=1$} \\
  \hline
  Program        & $P$ & $Q$ & $P$ & $Q$ \\
  \hline
  \raisebox{0pt}[12pt][0pt]
  {\it full fairness}  & + & + & + & + \\
  {\it justness}       & + & + & + & $-$ \\
  {\it progress}       & + & + & $-$ & $-$ \\
  $\emptyset$          & $-$ & $-$ & $-$ & $-$ \\
\end{tabular}\renewcommand\figurename{~~~Figure}
\caption{\it Liveness properties obtained \mbox{}~~~as a function of assumptions made}
\label{obtained}
\end{wrapfigure}

When assuming that parallel composition is implemented by
means of a scheduler that arbitrarily interleaves actions from both
processes, this conclusion for $P$ appears plausible. However, when $\|$
denotes a true parallel composition, where the program $P$ consists
of two completely independent processes, it appears more reasonable
to adopt the assumption of justness from Section~\ref{pj}, which
guarantees that $x$ will reach the value $1$.
However, whereas justness disqualifies the computation
$C^\infty$ for $P$, it rightly allows it for $Q$.
As the choice between the two cases of the \textbf{case}-statement is
made by forces outside our grasp---see Section~\ref{demonic}---one may
not rule out the possibility that the first case is chosen every round.

A sufficiently strong fairness assumption would (unjustly) eliminate this
computation even for $Q$---see Figure~\ref{obtained}; here I argue for a
theory of concurrency in which such a fairness assumption is not made.

Hence, virtually all existing process-algebraic approaches equate
two programs, of which one has the liveness property that eventually
$x$ will reach the value $1$, and the other does not, at least not
without assuming fairness. So those approaches that do not assume
fairness \cite{DH84,BHR84,Ho85,Wa90,Ros97,GLT09b} lack the power to establish this property for $P$.

\paragraph{Variations on this example}
Readers that prefer the states in the transition systems for $P$ and $Q$ to be valuations of the
variables $x$ and $y$ may easily unfold the given transition system into an infinite-state
one with this property. This unfolding preserves the state of affairs that $P$ and $Q$ are strongly bisimilar
systems of which one has a liveness property that the other lacks (when assuming justness but not fairness).

For readers that dislike the RMW operations in $Q$, one can easily skip the preconditions
\texttt{True} and $x=0$. To reobtain bisimilarity, $P$ needs then be changed into
\\[1ex]
\mbox{}\hfill
$\textbf{repeat}~~x:= 1 ~~\textbf{forever}\quad \| \quad\textbf{repeat}~~ y:=y+1 ~~\textbf{forever}$
\hfill($P'$)
\\[1ex]
The resulting labelled transition system of both $P'$ and $Q'$ features one state and two loop-transitions.
Again one obtains two bisimilar systems of which only one has the liveness property that 
$x$ will be $1$.

I have presented this example in pseudocode to stress that the problem is not specific to process
algebras. However, it can also be expressed in process algebras like CCS\@. Let $Q$ for instance be 
the process $Q:=(X|\bar b)\backslash b$ discussed on page~\pageref{AB}, and let $P=(Y|\tau)$ with
\plat{$Y\defis a.Y$} be the parallel composition of Alice and Cataline discussed on page~\pageref{AC}, but
with {\it cr} rendered as $\tau$. Both systems can be represented by the labelled transition system
drawn on page~\pageref{AB}. Hence they are strongly bisimilar. Yet, when assuming justness but not
fairness, $P$ has the liveness property that the second state will be reached, whereas $Q$ does not.

\section{A research agenda}

This brings me to my research agenda in this matter: the development
of a theory of concurrency that is equipped to ensure liveness
properties of distributed systems, incorporating justness assumptions
as explained above, but without making fairness assumptions.  This
theory should encompass process algebra, temporal logic, Petri nets and
other semantic models, as well as treatments of real-time, and of the
interaction between probabilistic and nondeterministic choice.

Since this involves distinguishing programs that are strongly
bisimilar, it requires a complete overhaul of the basic
machinery that has been built in the last few decennia.
It requires new equivalence relations between processes, new
axiomatisations, new induction principles to reason about infinite
processes, new congruence formats for operational semantics ensuring
compositionality of operators, and matching extensions with time and
probabilities.

As in the absence of fairness assumptions some crucial systems like
fair schedulers or mutual exclusion protocols cannot be accurately
specified in Petri nets or standard process algebras \cite{Vo02,KW97,GH15b}, it
also involves the study of adequate model or language extensions, and
their expressive power.

My agenda furthermore aims at developing a methodology that allows successful
application of the envisioned theory of concurrency to the specification, analysis and
verification of realistic distributed systems, focusing on cases where
the new balance in establishing liveness properties bears fruit.
An example of this is the analysis of routing protocols in wireless networks, where the
packet delivery property from Section~\ref{dangers} can hold in a meaningful way
only when assuming justness but not full fairness.

\newpage

\noindent
This research agenda involves the following tasks:
\vspace{-1pt}\leftmargini 20pt
\begin{enumerate}\parskip 0pt\itemsep 0pt
\item Setting up a framework for modelling specifications and implementations of
  distributed systems that encompasses justness without making global fairness assumptions.
\item To investigate and classify semantic equivalences (necessarily finer than or
  incomparable with strong bisimilarity) that respect liveness
  when assuming justness but not fairness.
\item To study liveness and justness properties in non-interleaved
  semantic models like Petri nets, event structures and higher
  dimensional automata.
\item To find complete axiomatisations and adequate induction
  principles for process algebras with justness.
\item To find syntactic requirements for the operational specification of operators
  that guarantee that relevant justness-preserving equivalences are congruences.
\item To study the necessary extensions to process algebras or Petri nets
  to model simple systems like fair schedulers, and investigate the
  relative expressiveness of process algebras with and without them.
\item To re-evaluate the possibility and impossibility results for
  encoding synchrony in asynchrony when insisting that justness
  properties are preserved.
\item To extend relevant justness-preserving formalisms with treatments
  of real-time.
\item To adapt the existing testing theory for nondeterministic
  probabilistic processes to a setting where justness is preserved.
\item To apply the obtained formalisms to (dis)prove liveness properties for real distributed systems.
\end{enumerate}

\newcounter{wp}
\newcommand{\workp}[1]{\subsection*{Task \thewp\refstepcounter{wp}: #1}}

\noindent
Below I describe these tasks in more detail.

\refstepcounter{wp}
\workp{A framework for modelling distributed systems}

Process algebra remains my favourite framework for modelling specifications and
implementations of distributed systems. When the aim is to establish
liveness properties, the semantics of process-algebraic specification
languages should come with a precise definition of which  runs count as just.
The paper \cite{GH18} provides a general definition of justness, which is
applied to CCS and several of its extensions in \cite{vG19}. It appears that
the same style of definition can easily be applied to process algebras involving
a CSP-style communication mechanism \cite{Ho85,Ros97} or name-binding \cite{Mi99,SW01},
as well as extensions of traditional process algebras with data \cite{GM14}.
Of course this needs to be checked carefully.
For some less usual process algebras it is not yet clear how to formalise the
concept of justness. This applies in particular to process algebras with a priority
mechanism \cite{CLN01}.

\begin{open}
Formally define justness for process algebras with priorities.
\end{open}

As pointed out in Section~\ref{liveness}, for some applications it is warranted
to make a fairness assumption. In \cite{GH18} we make a distinction between
\emph{local} and \emph{global} fairness assumptions. The latter apply globally
to all scheduling problems of a given kind that appear in a distributed system;
these are the kind of fairness assumptions classified in \cite{GH18} and in
Section~\ref{sec:hierarchy} of the present paper. A \emph{local} fairness
assumption, on the other hand, can be justified for a particular
scheduling problem in a system under consideration, and does not apply
automatically to other scheduling problems of the same kind.
In general I think it is warranted to employ local fairness assumptions in
specific circumstances, on top of a global justness assumption.
A good way to formalise this is to use specifications of distributed systems of
the form $(P,\F)$, where $P$ is an expression in a suitable process algebra,
whose semantics might consist of a labelled transition system, equipped with a
classification of some of its runs as just, and $\F$ is a
\emph{fairness specification}, ruling out some of the runs as unfair.
The fairness specification can for instance be given as a collection of formulas
in a temporal logic, each formalising a local fairness assumption.
The runs of the system that count for validating liveness properties then are
those that are both just and not ruled out by $\F$. This mode of specification
is discussed in greater detail in \cite{GH15a}, where also a consistency criterion on
tuples $(P,\F)$ is formulated. It has been used earlier in the specification
language TLA$^+$ \cite{La02}.
It also is the type of specification applied in \cite[Section~9]{TR13}
for the formal specification of the Ad hoc On-demand Distance Vector (AODV)
protocol~\cite{rfc3561}, a wireless mesh network routing protocol,
using the process algebra AWN \cite{FGHMPT12a} for the first part and LTL~\cite{Pn77} for the second.
Interestingly, the local fairness assumptions we needed in that work can be
positioned strictly between strong and weak fairness. I conjecture that this
will be the case for many applications. In \cite[Section~7]{GH18} this form of
fairness was formalised in a general setting, and called \emph{strong weak fairness}.

\workp{A classification of semantic equivalences and preorders}

A crucial prerequisite for verifying that an implementation meets a
specification is a definition of what this means. Such a definition
can be given in the form of an equivalence relation or preorder on
a space of models that can be used to describe both specifications and
implementations. For sequential systems, an overview of suitable
preorders and equivalence relations defined on labelled transition
systems is given in \cite{vG01,vG93}. Preorders and equivalences
specifically tailored to preserve safety and liveness properties are
explored in \cite{vG10}. Equivalences for non-sequential systems are
discussed, e.g., in \cite{GG01}. They include \emph{interleaving equivalences},
in which parallelism is equated with arbitrary interleaving, as well
as equivalence notions that take, to some degree, concurrency
explicitly into account. In \cite{vG15} I show that none of these
equivalences respect \emph{inevitability} \cite{MOP89}: when assuming
justness but not fairness, they all equate systems of which only one
has the property that all its runs reach a specific success state.
Hence, none of these equivalences are suitable for a process-algebraic
framework destined to establish liveness properties under the justness
assumption advocated above.

\begin{open}
  Find and classify suitable semantic equivalences that
  respect liveness when assuming justness but not fairness.
\end{open}

As shown in \cite{vG10}, safety and liveness properties are intimately
linked with the notions of may- and must-testing of De Nicola \&
Hennessy \cite{DH84}. However, \cite{vG10} also treats
\emph{conditional liveness} properties that surpass the power of
must-testing. In \cite{vG19b} I propose a notion of \emph{reward testing},
and show that it matches with conditional liveness properties.
Similar testing frameworks can be applied to derive
preorders for concurrent processes that respect (conditional) liveness
properties in the presence of the justness assumption.
This may yield a result similar to the fair failure preorder of \cite{Vo02}.

Additionally, variants of most existing preorders and equivalences may
be found that respect liveness under justness assumptions.
Strong bisimulation, for instance, induces a relation between the runs of two
bisimilar systems.\footnote{One could simply say that two paths $\pi_1$ and
  $\pi_2$ are related iff for all $n$ the $n^{\rm th}$ state of $\pi_1$ is related to the 
  $n^{\rm th}$ state of $\pi_2$, and the actions between states are the same
  too. An alternative is to relate fewer paths, namely only those that are
  matched by a strategy for the bisimulation game as proposed in
  \cite{HR00}---they did this to capture fairness rather than justness.}
A bisimulation may be called \emph{justness preserving} if it
relates just runs with just runs only. Now justness-preserving strong
bisimilarity is a finer variant of strong bisimilarity that respects liveness when
assuming justness but not fairness.

Possibly, just forcing an existing preorder to respect liveness by
adding appropriate clauses to its definition as sketched above gives a result that is
less suitable for verification tasks. The resulting relation may be hard to
decide, and may fail to satisfy the \emph{recursive specification principle} (RSP) of \cite{BK86a,BBK87a,GV93},
saying that guarded recursive specifications have unique solutions.
This principle plays a central r\^ole in verification by means of equivalence
checking \cite{Ba90,GV93,GM14}. To sketch the problem, consider the processes $P$ and
$Q$ of Section~\ref{main}. In strong bisimulation semantics these processes are
identified, and this can be shown by means of RSP\@.
For consider the system of the following two equations:
\[  U = (y:=y+1).U + (x:=1).V \qquad V = (y:=y+1).V \]
where $(y:=y+1)$ and $(x:=1)$ are simply treated as atomic actions.
This system of equations is \emph{guarded}, as defined in \cite{Mi89,BK86a,BBK87a}.
Moreover, $P$ is a solution for this system of equations up to strong bisimilarity,
in the sense that if one substitutes $P$ for the first process variable $U$ and
something suitable for the variable $V$, one obtains two statements that hold when
interpreting $=$ as strong bisimilarity. Similarly, also $Q$ is a solution.
Based on this, RSP tells that $P$ and $Q$ are strongly bisimilar with each other.
This is one of the standard ways to show the semantic equivalence of
specifications and implementations.
Now when moving from ordinary to justness-preserving strong bisimilarity,
the processes $P$ and $Q$ are no longer equivalent. For a bisimulation would
relate the run of $P$ that forever takes the left $y:=y+1$-loop, with the
corresponding run of $Q$, and thus relates an unjust run to a just one.
Nevertheless both $P$ and $Q$ turn out to be solutions to the above system of
equations even up to justness-preserving strong bisimilarity.
So applying RSP would yield the wrong conclusion that $P$ and $Q$ are equivalent.
It follows that RSP, while sound for strong bisimilarity, is not sound for
justness-preserving strong bisimilarity. For the same reasons it appears to be
unsound for the fair failure preorder alluded to above.
This robs us of a valuable verification tool.

This problem might be addressed by formulating more discriminating
preorders and equivalences that do not feature explicit conditions on
infinite runs, yet respect liveness properties.
An idea might be a version of strong bisimilarity that also takes the component labels
of Appendix~\ref{CCS} into account, rather than merely the action labels.
While such an equivalence surely preserves justness, it may be considered too
fine, in that it distinguishes processes that for all practical purposes should
be identified. Examples are $P|Q \neq Q|P$ and $P|{\bf 0}\neq P$.
A suitable semantic equivalence would be less discriminating than that, but, in
some aspects more discriminating than justness-preserving bisimilarity.

Another equivalence that respects liveness when assuming justness but not fairness
is the \emph{structure preserving bisimilarity} of \cite{vG15}. That equivalence is
most likely also too discriminating for many verification tasks, so more
research is called for. Location based equivalences \cite{BCHK94} might be of use here.

\workp{Petri nets and other semantic models}

The standard semantics of process algebras is in terms of labelled
transition systems. However, for accurately capturing causalities
between event occurrences, models like Petri nets~\cite{Rei13}, event
structures~\cite{Wi87a} or higher dimensional automata~\cite{Pr91a,vG91}
are sometimes preferable. As shown above,
unaugmented labelled transition systems are not sufficient to capture
liveness properties when assuming justness but not fairness.  On the
other hand, Petri nets naturally offer a structural characterisation of
justness: if a transition is enabled, and none of the tokens enabling
it are ever consumed by a competing transition, then it will
eventually fire.

\begin{open}
Is the structural characterisation of justness from Petri nets consistent
with the characterisations of justness for process algebras from \cite{GH15a,GH18,vG19}?
\end{open}
To be precise, a process-algebraic expression $P$ is translated into a Petri
net $\den{P}_{\rm PN}$ through the standard Petri nets semantics of \cite{GM84,Wi84,GV87,Old91}.
The just runs of $\den{P}_{\rm PN}$ are structurally determined. So
$\den{P}_{\rm PN}$ translates further to labelled transition system
$\den{\den{P}_{\rm PN}}_{\rm LTS}$ in which some of the runs are marked as just.
On the other hand, using for instance the component-enriched structural
operational semantics of Appendix~\ref{CCS}, $P$ translates directly into such a
labelled transition system $\den{P}_{\rm LTS}$.
The question now is whether $\den{\den{P}_{\rm PN}}_{\rm LTS}$ is semantically
equivalent to $\den{P}_{\rm LTS}$.
Of course this question can be addressed only when a suitable equivalence has been chosen.

Another interesting question is whether event structures or higher dimensional
automata also offer structural characterisations of justness.

\workp{Complete axiomatisations and induction principles}

Many process-algebraic verifications \cite{Ba90} employ principles
like the \emph{recursive specification principle} (described in Task~2)
and the \emph{approximation induction principle} \cite{BK86a,BBK87a},
allowing to derive properties of infinite systems through analysis of
their finite approximations. As argued above, it is likely that these principles do not
hold in straightforward variants of existing semantic equivalences
that respect liveness when assuming justness but not fairness. The two ways to cope
with that are (1) searching for finer equivalences that do not have
this shortcoming, or (2) searching for alternative induction principles that
hold and are useful in verification.
At this time I cannot say which of these directions is the most
promising; more research is in order here.

Algebraic laws have also shown their use in verification, and the
isolation of a complete collection of such laws is often the starting
point of both a good verification toolset and a better understanding
of the semantic concepts involved. For these reasons, finding complete
axiomatisations of suitable equivalences to deal with justness and
liveness is an important task.

\begin{open}
  Find complete axiomatisations and useful induction principles
  for suitable equivalences that respect liveness when assuming justness but not fairness.
\end{open}

\workp{Congruence formats for structural operational semantics}

In process-algebraic verification it is essential that composition
operators on processes, such as the parallel composition, are
\emph{compositional} w.r.t.\ the semantic equivalence employed.
This means that the composition of two processes, each given as an
equivalence class of, say, labelled transition systems, is independent
on the choice of representatives within these equivalence classes.
Compositionality of an operator w.r.t.\ an equivalence is the same as
the equivalence being a congruence w.r.t.\ the operator.

Starting with \cite{dS85,BIM95,GrV92-lb}, the most elegant and efficient
way to establish compositionality results in process algebra is by
means of \emph{congruence formats}, sets of syntactic restrictions
on the operational specification of the behaviour of composition
operators (i.e.\ on rules like the ones of Table~\ref{tab:CCS}) that ensure
compositionality. This line of research is continued in
\cite{Gr93,BolG96,Ul92,Ver95,Bl95,Ul00,UP02,UY00,Fok00b,BFG04,vG11,GMR06,FGW12,vG17b}.

\begin{open}
Find congruence formats tailored to the equivalences produced by Task 2.
\end{open}

\workp{Expressiveness}

In the absence of fairness assumptions even simple systems like fair
schedulers or mutual exclusion protocols cannot be accurately
specified in standard process algebras like CCS or in Petri nets.
This is shown in \cite{Vo02,KW97,GH15b}.
However, these systems \emph{can} be accurately specified in process algebras that feature
\emph{non-blocking reading} \cite{CDV09,EPTCS54.4}, i.e.\ where one can model write actions to a
shared memory that can not be blocked/delayed by read actions to that memory. Such process algebras
include an extension of PAFAS \cite{CDV09}, as well as extensions of CCS with
broadcast communication \cite{GH15a}, priorities \cite{GH15b} or signals \cite{EPTCS255.2}.

When assuming fairness, fair schedulers or mutual exclusion protocols can be correctly specified in
CCS \cite{CorradiniEtAl09}.
But as argued in Section~\ref{dangers}, a fairness assumption is not warranted here.
Fair schedulers, in particular, are used to implement fairness assumptions made in specifications of systems;
assuming fairness to show that a fair scheduler operates as intended totally defeats this purpose.

When assuming only progress and not justness, there is no hope of ever specifying a correct fair
scheduler or mutual exclusion protocol, not even in the mentioned extensions of CCS\@.
It is for this reason that this problem falls within the scope of the present paper.

Since there are at least three extensions of CCS that are expressive enough to model
fair schedulers and mutual exclusion protocols, while CCS itself lacks the required expressiveness,
the relative expressiveness of these extensions is an interesting question.
It is an issue that needs to be studied along with other arguments for one specification formalism over another.

\begin{trivlist} \item[\hspace{\labelsep}\bf Open problem 6a]
Compare the relative expressiveness of the process algebras that can capture mutual exclusion.
\end{trivlist}
\refstepcounter{open}

The resulting study on the relative expressiveness of process
algebras ought to be placed within the formal frameworks developed for
comparisons of expressive power provided in \cite{Bo85,Gorla10a,vG12}.
That is, to show that a specification formalism B is at least as expressive as a 
specification formalism A, a \emph{valid encoding} from A into B needs to be presented.
To show that B has not all the expressiveness of A one shows that no such encoding exists.
Here a valid encoding from A into B is defined as a translation that satisfies some properties.
In the work of Gorla \cite{Gorla10a} this amounts to five correctness criteria on the translation.
In \cite{Bo85,vG12} on the other hand, it is required that for any specification $P$ of a system in
formalism A, the translation of $P$ into formalism B is semantically equivalent to $P$.
This form of validity is parametrised by the choice of a semantic equivalence that spans the
semantic domains in which the languages A and B are interpreted.

When comparing languages like CCS and its extensions mentioned above, preservation of properties
like justness is essential. So for such purposes, another criterion needs to be added to the
definition of a valid translation in the sense of \cite{Gorla10a}, namely a criterion that
guarantees that a system $P$ and its translation have the same liveness properties when assuming
justness and not fairness. When working with valid encodings as in \cite{Bo85,vG12}, the semantic
equivalence that is chosen as parameter in the comparison should be one as found under Task 2.

In fact, when sharpening the concept of a valid encoding in this way, many relative expressiveness
results established in the literature need to be reconfirmed or sharpened as well.

\begin{trivlist} \item[\hspace{\labelsep}\bf Open problem 6b]
Recast relative expressiveness results from the literature in a justness-respecting framework.
\end{trivlist}
In \cite{dS85,vG94a} for instance, results are obtained saying that all languages
with a structural operational semantics of a certain form can be
translated into versions of the process algebras {\sc Meije} \cite{AB84}
and ACP \cite{BK86a}, respectively. These translations are correct up to strong bisimilarity.
An interesting question is what happens to such results when strengthening the strong bisimilarity
in a justness-preserving way. Additionally, one may wonder if some process algebra upgraded with
signals, broadcast communication or priorities may play a similar unifying r\^ole in a
justness-preserving setting.

\begin{trivlist} \item[\hspace{\labelsep}\bf Open problem 6c]
Find a simple process algebra in the style of {\sc Meije} such that all common process algebras can
be translated in to it, where the translation preserves strong bisimilarity as well as liveness
properties when assuming justness but not fairness.
\end{trivlist}

The expressiveness of models of concurrency like Petri nets, event structures and higher
dimensional automata, is also an interesting problem. The extensions of CCS with 
broadcast communication, priorities or signals resist translation into the default incarnations of
these models. CCS with signals appears to be expressible into Petri nets extended with read arcs
however \cite{Vo02}. 

\begin{trivlist} \item[\hspace{\labelsep}\bf Open problem 6d]
Find extensions of the standard models of Petri nets, event structures and higher
dimensional automata that capture broadcast communication, priorities and/or signals.
If possible, show that the resulting models are ``fully expressive'' in some sense.
\end{trivlist}

\workp{Asynchronous interaction in distributed systems}

In \cite{GGS13}, a precise characterisation is given of those distributed systems,
modelled as structural conflict nets, a large class of Petri nets,
that can be implemented without using synchronous communication.
This is part of research aiming to determine to what extent synchronous
communication can be simulated by asynchronous communication.
In this work an original net and its asynchronous implementation are compared
by means of a semantic equivalence that takes divergence, branching time and
causality to some extent into account. It turned out that the result was to a
large extent independent on the precise choice such an equivalence, i.e.\ on the
degree to which it takes divergence, branching time and causality into account.
This result needs to be revisited when using semantic
equivalences with the appropriate respect for justness.
It would be interesting to see if this changes the class of distributed systems
that have asynchronous implementations.

\begin{open}
Characterise the class of structural conflict nets that have asynchronous
implementations under an equivalence that not only takes divergence, branching
time and causality to some extent into account, but also respects liveness when
assuming justness but not fairness.
\end{open}

\workp{Real-time}

When using a process algebra that takes time explicitly into account, justness
may be obtained as a derived concept: a run is just iff time grows unboundedly.%
\footnote{A link between fair runs and runs where time grows unboundedly was made in \cite{Ly96}.}
As a consequence, extra structure to model justness may not be needed.
Naturally, one may wonder if a given untimed semantics of a process algebra featuring
justness can be obtained through an extension of the model with time;
and vice versa how a notion of a just run obtained from a given timed process
algebra can be characterised when abstracting from time.
A version of the first question has already been addressed in \cite{CDV06a}
in the context of the process algebra PAFAS\@.

\begin{open}
Establish the relationship between notions of justness defined explicitly in
untimed process algebra, and those derived from everlasting runs in timed extensions.
\end{open}

\workp{Extensions with probabilistic choice}

Many semantic equivalences for distributed systems have been extended to a
setting featuring both nondeterminism and probabilistic choice.
Prominent examples are the probabilistic bisimulation equivalences of \cite{Seg96}, and the may- and
must-testing equivalences defined in \cite{WL92} and characterised in \cite{DGHM08}.
The latter work determines the coarsest semantic equivalences for
probabilistic processes that respect safety and liveness properties, respectively, when
assuming progress but not justness.
A natural task is to redo this work when assuming justness.

\begin{open}
Characterise the coarsest semantic equivalence for nondeterministic
probabilistic processes that respect liveness properties when
assuming justness but not fairness.
\end{open}
More in general, extended the relevant semantic equivalences from Task 2 to a
probabilistic setting.

\workp{Applications}

Having argued that the work outlined above is a necessary step towards formalising liveness
properties for realistic distributed systems, naturally one would like to see applications
of such liveness properties, including proofs that they do or do not hold,
for cases where fairness assumptions are truly unwarranted, and merely assuming progress is insufficient.
One such application concerns the packet delivery property for routing protocols in wireless networks.
I have indicated in Section~\ref{dangers} that assuming fairness allows one to establish versions of
this property that do not hold in reality. On the other hand, without assuming justness, no useful
packet delivery property will ever hold. Assume a scenario where four network nodes are active,
with nodes 1 and 2 within transition range of each other, and outside transmission range of nodes 3
and 4. The packet delivery property says that an attempt of node 1 to deliver a message to node 2
ought to succeed. Yet, there exists an infinite run in which the message of 1 to 2 is never sent,
let alone received, because all that occurs is an infinite sequence of chatter between nodes 3 and 4.
In fact, attempts to properly formalise packet delivery \cite{TR13} were one of the reasons to
formalise the notion of justness in \cite{GH15a,GH18}.

\begin{open}
  Establishing packet delivery for suitable routing protocols for wireless networks.
\end{open}

\section{Conclusion}

I have presented a research agenda aiming at laying the foundations of
a theory of concurrency that is equipped to ensure liveness properties
of distributed systems without making fairness assumptions.  The
agenda also includes the application of this theory to the specification,
analysis and verification of realistic distributed systems.
I have divided this agenda into 10 tasks, each of which involves solving an open problem.
It is my hope that this document stimulates its readership to address some of these
tasks and problems.
\appendix

\section{CCS}\label{CCS}

\newcommand{\ca}{a}
\noindent
CCS \cite{Mi89} is parametrised with sets ${\K}$ of \emph{agent identifiers} and $\A$ of \emph{names};
each $X\in\K$ comes with a defining equation \plat{$X \stackrel{{\it def}}{=} P$} with $P$ being a CCS expression as defined below.
\href{https://en.wikipedia.org/wiki/List_of_mathematical_symbols_by_subject#Set_operations}{\raisebox{0pt}[10pt]{$Act := \A\dcup\bar\A\dcup \{\tau\}$}}
is the set of {\em actions}, where $\tau$ is a special \emph{internal action}
and $\bar{\A} := \{ \bar{\ca} \mid \ca \in \A\}$ is the set of \emph{co-names}.
Complementation is extended to $\bar\A$ by setting $\bar{\bar{\mbox{$\ca$}}}=\ca$.
Below, $\ca$ ranges over $\A\cup\bar\A$, $\alpha$ over $Act$, and $X,Y$ over $\K$.
A \emph{relabelling} is a function $f\!:\A\mathbin\rightarrow \A$; it extends to $Act$ by
$f(\bar{\ca})\mathbin=\overline{f(\ca)}$ and $f(\tau):=\tau$.
The set $\cT_{\rm CCS}$ of CCS expressions or \emph{processes} is the smallest set including:
\begin{center}
\begin{tabular}{@{}lll@{}}
$\sum_{i\in I}\alpha_i.P_i$ & for $I$ an index set, $\alpha_i\mathbin\in Act$ and $P_i\mathbin\in\cT_{\rm CCS}$ & \emph{guarded choice} \\
$P|Q$ & for $P,Q\mathbin\in\cT_{\rm CCS}$ & \emph{parallel composition}\\
$P\backslash L$ & for $L\subseteq\A$ and $P\mathbin\in\cT_{\rm CCS}$ & \emph{restriction} \\
$P[f]$ & for $f$ a relabelling and $P\mathbin\in\cT_{\rm CCS}$ & \emph{relabelling} \\
$X$ & for $X\in\K$ & \emph{agent identifier}\\
\end{tabular}
\end{center}
The process $\sum_{i\in \{1,2\}}\alpha_i.P_i$ is often written as $\alpha_1.P_1 + \alpha_2.P_2$,
$\sum_{i\in \{1\}}\alpha_i.P_i$ as $\alpha_1.P_1$, and $\sum_{i\in \emptyset}\alpha_i.P_i$ as ${\bf 0}$.
Moreover, one abbreviates $\alpha.{\bf 0}$ by $\alpha$, and $P\backslash\{\ca\}$ by $P\backslash \ca$.
The semantics of CCS is given by the labelled transition relation
$\mathord\rightarrow \subseteq \cT_{\rm CCS}\times Act \color{red} \times \Pow(\Ce) \color{black} \times\cT_{\rm CCS}$, where transitions 
\plat{$P\goto[C]{\alpha}Q$} are derived from the rules of \autoref{tab:CCS}.
\begin{table*}[t]
\caption{Structural operational semantics of CCS}
\label{tab:CCS}
\normalsize
\begin{center}
  \newcommand\mylabel\weg
  \renewcommand\eta\alpha
  \renewcommand\ell\alpha
\framebox{$\begin{array}{c@{\quad}c@{\qquad}c}
&\sum_{i\in I}\alpha_i.P_i \goto[\{\varepsilon\}]{\alpha_j} P_j\makebox[20pt][l]{~~~~($j\in I$)} \\[4ex]
\displaystyle\frac{P\goto[C]{\eta} P'}{P|Q \goto[\Left\cdot C]{\eta} P'|Q} \mylabel{Par-l}&
\displaystyle\frac{P\goto[C]{\ca} P' ,~ Q \goto[D]{\bar{\ca}} Q'}{P|Q \goto[\Left\cdot C \;\cup\; \R\cdot D]{\tau} P'| Q'} \mylabel{Comm}&
\displaystyle\frac{Q\goto[D]{\eta} Q'}{P|Q \goto[\R\cdot D]{\eta} P|Q'} \mylabel{Par-r}\\[4ex]
\displaystyle\frac{P \goto[C]{\ell} P'}{P\backslash L \goto[C]{\ell}P'\backslash L}~~(\ell,\bar{\ell}\not\in L) \mylabel{Res}&
\displaystyle\frac{P \goto[C]{\ell} P'}{P[f] \goto[C]{f(\ell)} P'[f]} \mylabel{Rel} &
\displaystyle\frac{P \goto[C]{\alpha} P'}{X\goto[C]{\alpha}P'}~~(X \stackrel{{\it def}}{=} P) \mylabel{Rec}
\end{array}$}
\vspace{-2ex}
\end{center}
\end{table*}
The second labels {\color{red}$C\in \Pow(\Ce)$}, displayed in red, are not part of these
transitions; they will be introduced, with the definition of $\Ce$, below.
Such a transition indicates that process $P\in \cT_{\rm CCS}$ can perform the action $\alpha\in Act$
and thereby transform into process $Q\in \cT_{\rm CCS}$.
The process $\sum_{i\in I}\alpha_i.P_i$ performs one of the actions $\alpha_j$ for $j\in I$ and subsequently acts as $P_j$.
The parallel composition $P|Q$ executes an action from $P$, an action from $Q$, or in the case where
$P$ and $Q$ can perform complementary actions $c$ and $\bar{c}$, the process can perform a
synchronisation, resulting in an internal action $\tau$. 
The restriction operator $P \backslash L$ inhibits execution of the actions from $L$ and their complements. 
The relabelling $P[f]$ acts like process $P$ with all labels $\alpha$ replaced by $f(\alpha)$.
Finally, the rule for agent identifiers says that an agent $X$ has the same transitions as the body $P$ of its defining equation.
The standard version of CCS \cite{Mi89} features a \emph{choice} operator $\sum_{i\in I}P_i$; here
I use the fragment of CCS that merely features guarded choice.

\paragraph{Components}
The second label of a transition indicates the set of (parallel) \emph{components} involved in
executing this transition. The set $\Ce$ of components is defined as $\{\Left,\R\}^*$, that is, set
of strings over the indicators $\Left$eft and $\R$ight, with $\varepsilon\in\Ce$ denoting the empty sequence
and $\textsc{d}\cdot C := \{\textsc{d}\sigma \mid \sigma\in C\}$ for $\textsc{d}\in\{\Left,\R\}$.
The process $Q$ from page~\pageref{AB} for instance has the transitions \plat{$Q \goto[\{\Left\}]{a} Q$}
and \plat{$Q \goto[\{\Left,\R\}]{\tau} (X|{\bf 0})\backslash c$}. The first transition denotes Alice
performing $a$ny other activity; it involves the left component of the parallel composition only. The second transition
denotes a call between Alice and Bob; it involves both components.
The idea to extend the structural operational semantics of CCS with a component labelling as
indicated in \autoref{tab:CCS}, to achieve an elegant formalisation of justness, stems from Victor Dyseryn [personal communication].

\section{Justness}\label{justness}

This appendix interprets each CCS process as a state in a component-labelled transition system.
This is an ordinary labelled transition system, upgraded with a labelling of the transitions $t$ by the
parallel components involved in performing $t$. It also enriches such systems with the set $B$
of blocking actions. Subsequently, it presents the definitions of a path and the completeness
criteria progress and justness, thereby formalising the intuitions from Section~\ref{pj}.

\begin{definition}{LTS}
A \emph{component-labelled transition system} (CLTS) is a tuple $(S, \Tr, \source,\target,\ell,B,\linebreak[1]\comp)$
with $S$ and $\Tr$ sets (of \emph{states} and \emph{transitions}), $\source,\target:\Tr\rightarrow S$,
$\ell:\Tr\rightarrow Act$ for a set of actions $Act$,
$B\subseteq Act$ a set of \emph{blocking} actions, and
$\comp:\Tr\rightarrow \Pow(\Ce)\setminus\emptyset$ for some set of components $\Ce$,
such that:
\begin{equation}\label{noninterference}\begin{minipage}{5.2in}{
   for all $t,v\in\Tr$ with $\source(t)=\source(v)$ and $\comp(t)\cap\comp(v)=\emptyset$,\\
   there is a $u\in \Tr$ with $\source(u)=\target(v)$, $\ell(u)=\ell(t)$ and $\comp(u)=\comp(t)$.}
\end{minipage}\end{equation}
\end{definition}
The underlying idea \cite{GH18} is that if a transition $v$ occurs that does not affect components in $\comp(t)$,
then the internal state of those components is unchanged, so any synchronisation between these
components that was possible before $v$ occurred, is still possible afterwards.

Let $\Tr_{\neg B} := \{t\in \Tr \mid \ell(t)\notin B\}$ be the set of \emph{non-blocking} transitions.

The following CLTS serves as a semantic model for CCS:
Take $S$ to be the set $\cT_{\rm CCS}$ of CCS processes,
and $\Tr$ the set of transitions \plat{$P\goto[C]{\alpha}Q$} derivable from the rules of \autoref{tab:CCS}.
For each such $t\in\Tr$ one takes $\source(t):=P$, $\target(t):=Q$, $\ell(t)=\alpha$ and $\comp(t):=C$.
It is straightforward to check that property (\ref{noninterference}) is satisfied (or see \cite{vG19}).
The set $B\subseteq Act$ can be chosen at will, depending on the intended application,
as long as one promises to use only restrictions $P\backslash L$ with $L\subseteq B$ and no
renamings $f$ that rename a non-blocking action into a blocking one. So the default choice for $B$
is $Act\setminus\{\tau\}$.

A CLTS could also have been defined as a triple $(S,\Tr,B)$, with $S$ a set,
$\Tr \subseteq S \times Act \times \Pow(\Ce) \times S$ for sets $Act$ of actions and $\Ce$ of
components, and $B\subseteq Act$.

\begin{definition}{path}
A \emph{path} in a CLTS $(S, \Tr, \source,\target,\ell,B,\comp)$ is an alternating sequence
$s_0\,t_1\,s_1\,t_2\,s_2\cdots$ of states and transitions, starting with a state and
either being infinite or ending with a state, such that $\source(t_i)=s_{i-1}$ and
$\target(t_i)=s_i$ for all relevant $i$.
\end{definition}
A \emph{completeness criterion} is a unary predicate on the paths in a (component-labelled) transition system.

\begin{definition}{progress}
A path in a CLTS is \emph{progressing} if either it is infinite or its last state is
the source of no non-blocking transition $t \in \Tr_{\neg B}$.
\end{definition}
Progress is a completeness criterion.

\begin{definition}{concurrency}
Two transitions $t,u\in\Tr$ are \emph{concurrent}, notation $t \smile u$, if $\comp(t)\cap \comp(u)=\emptyset$,
i.e., they have no components in common.
\end{definition}
If $t$ and $u$ are not concurrent, $t\nconc u$, then $u$ is said \emph{to interfere with} $t$.

\begin{definition}{justness}
  A path $\pi$ in an CLTS is \emph{just}
  if for each transition $t \in \Tr_{\neg B}$ with $s:=\source(t)\in\pi$, a transition $u$ occurs
  in $\pi$ past the occurrence of $s$, such that $t \nconc u$.
\end{definition}
Informally, justness requires that once a non-blocking transition $t$ is enabled,
sooner or later a transition $u$ will occur that interferes with it, possibly $t$ itself.

Note that justness is a completeness criterion stronger than progress.

\paragraph{Historical notes and disclaimers}

\df{justness} of justness stems from \cite{GH18}. There it was formulated for transition systems without
the labelling function $\ell$, and with an initial state $I\in S$. This makes no difference.
Appendix A of \cite{GH18} defines the function $\comp:\Tr\rightarrow\Pow(\Ce)\setminus\emptyset$ for a somewhat different fragment of CCS than
is used here. Although that definition has a rather different style than the one of this paper,
on the intersection of both fragments of CCS the resulting functions $\comp$ are easily seen to be the same.\pagebreak[3]

The original definition of justness applied to CCS stems from \cite{GH15a}. That (coinductive)
definition is in a very different style and does not use the concepts $\comp$ and $\smile$.
In \cite{vG19} it is shown that the concept of justness from \cite{GH15a} is the same as that
from \cite{GH18} and the present paper. Moreover, \cite{vG19} contemplates 5 different concurrency
relations $\smile$ between transitions for full CCS, and shows that through \df{justness} all of
them give rise to the same concept of justness. The concurrency relation of \df{concurrency} is
$\smile'_c$ in \cite{vG19}.

The components $\comp(t)$ of a CCS transition $t$ are necessary participants in the execution of
$t$, in the sense that all of them should be ready in order for the transition to be enabled.
They also are the components that are affected by the execution of $t$, in the sense that $t$ may
induce a state-change within all these components. This situation is common for many process algebras.
In \cite{vG19} however, extensions of CCS are reviewed in which only some components are
necessary and only some are affected. Here, subsets $\npc(t),\afc(t)\subseteq\comp(t)$ of
\emph{necessary participants} and \emph{affected components} are defined, and \df{concurrency}
declares $t\aconc u$ iff $\npc(t)\cap \afc(u)=\emptyset$. In this setting it can be that $u$
interferes with $t$ (notation $t\naconc u$) even though $t$ does not interfere with $u$, thus giving
rise to an asymmetric concurrency relation. The treatment above just deals with the special case that
$\npc(t) = \afc(u) = \comp(t)$ for all $t\mathbin\in\Tr$.

In \cite{vG19} one encounters transition systems featuring \emph{signal transitions} $t$.
Such a transition satisfies $\source(t)=\target(t)$ and does not model an action occurrence or state
change in the represented distributed system. To properly apply the
Definitions~\ref{df:path}--\ref{df:J-fair} to such transition systems, one should first normalise the
transition system by deleting all signal transitions \cite{vG19}.

In \cite{vG19} the component $B$ is absent from the definition of a transition system, meaning that
actions are not a priory distinguished into blocking and non-blocking ones. Instead, the concepts of
progress, justness and fairness are indexed with a $B$: a path is called $B$-just if it satisfies
the definition of justness when taking $B$ to be the set of blocking actions. This makes justness
into a family of predicates on paths, rather than a single predicate.

\section{Fairness}\label{fairness}

To formalise weak and strong fairness I use labelled transition systems $(S,\Tr,\source,\target,\ell,B,\Tk)$
that are augmented with a set $\Tk\subseteq\Pow({\Tr})$ of \emph{tasks}
$T\subseteq {\Tr}$, each being a set of transitions.
Here $S,\Tr,\source,\target,\ell,B$ are exactly as in \df{LTS} of a CLTS.

\begin{definitionA}{\cite{GH18}}{fair}
For such a $\IT=(S,\Tr,\source,\target,\ell,B,\Tk)$, a task $T\mathbin\in\Tk$ is
\emph{enabled} in a state $s\in S$ if there exists a non-blocking transition $t\in T$ with
$\ell(t)\notin B$ and $\source(t)=s$.
The task is said to be \emph{perpetually enabled} on a path $\pi$ in $\IT$, if it is enabled in
every state of $\pi$.
It is \emph{relentlessly enabled} on $\pi$, if each suffix of $\pi$ contains a state
in which it is enabled.\footnote{This is the case if the task is enabled in infinitely many states
of $\pi$, in a state that occurs infinitely often in $\pi$, or in the last state of a finite $\pi$.}
It \emph{occurs} in $\pi$ if $\pi$ contains a transition $t\mathbin\in T\!$.

A path $\pi$ in $\IT$ is \emph{weakly fair} if, for every suffix $\pi'$ of $\pi$,
each task that is perpetually enabled on $\pi'$, occurs in $\pi'$.
A path $\pi$ in $\IT$ is \emph{strongly fair} if, for every suffix $\pi'$ of $\pi$,
each task that is relentlessly enabled on $\pi'$, occurs~in~$\pi'$.
\end{definitionA}
In \cite{GH18} many notions of fairness occurring in the literature were casts as instances of this
definition. For each of them the set of tasks $\Tk$ was derived, in different ways, from some other
structure present in the model of distributed systems from the literature. In fact, \cite{GH18}
considers 7 ways to construct the collection $\Tk\!$, and speaks of fairness of \emph{actions},
\emph{transitions}, \emph{instructions}, \emph{synchronisations}, \emph{components},
\emph{groups of components} and \emph{events}. This yields 14 notions of fairness. To compare them,
each is defined formally on a fragment of CCS, and the 14 fairness notions, together with progress,
justness, and some other completeness criteria, are ordered by strength by placing them in a lattice.

For transition systems $(S,\Tr,\source,\target,\ell,B,\smile,\Tk)$ augmented with a concurrency
relation as well as a set of tasks, the following notion of \J-fairness is proposed in \cite{GH18}:

\begin{definitionA}{\cite{GH18}}{J-fair}
For $\IT=(S,\Tr,\source,\target,\ell,B,\smile,\Tk)$, a task $T\mathbin\in\Tk$ is
enabled \emph{during the execution} of a transition $u\in \Tr$ if there exists a $t\in T$
with $\ell(t)\notin B$, $\source(t)=\source(u)$ and $t \smile u$.
It is \emph{continuously enabled} on a path $\pi$ iff it is enabled in every state
and during every transition of $\pi$.
A path $\pi$ is \emph{\J-fair} if, for every suffix $\pi'$ of $\pi$,
each task that is continuously enabled on $\pi'$, occurs in $\pi'\!$.
\end{definitionA}
The name \emph{\J-fairness} is inspired by the notion of \emph{justice} from 
Lehmann, Pnueli \& Stavi \cite{LPS81}. They called a computation \emph{just}
``if it is finite or if every transition%
\footnote{The notion of ``transition'' from \cite{LPS81} is the same as what I call ``component''.}
which is continuously enabled beyond a certain point is taken infinitely many times.''
What this means exactly depends on how one formalises the notion of a component
being ``continuously enabled''. The literature following \cite{LPS81} has
systematically interpreted this as meaning ``in every state'' (``beyond a certain point''),
thereby translating it as ``perpetually enabled'' from \df{fair}.
This is consistent with the words of \cite{LPS81}, as at some point they write
``[This] is an unjust computation, since [the transition or component] $f_1$ is enabled on all
states in it but is never taken.'' 
This makes justice a form of weak fairness as in \df{fair}.
However, a stricter interpretation of ``continuously enabled'' is given in \df{J-fair}, and this
gives rise to a yet weaker notion of fairness that so far has not received much explicit attention in the literature.

I now instantiate the definitions above for a particular choice of $\Tk$, namely fairness of
components \cite{GH18}. Its definition applies to component-labelled transition systems.
Under fairness of components each component determines a task. A transition
belongs to that task if that component contributes to it.
So $\Tk:= \{T_\sigma \mid \sigma\in\Ce\}$ with $T_\sigma:=\{t\in\Tr \mid \sigma\in\comp(t)\}$.
This is the type of fairness studied in \cite{CS87,CDV06c}; it also
appears in \cite{KdR83,AFK88} under the name process fairness.
Fairness of components can also be regarded as the type of fairness studied in \cite{LPS81},
although that paper does not address synchronisation, and thus deals with the special case that 
$\comp(t)$ is a singleton for all $t\in\Tr$.

\begin{proposition}\rm\label{hierarchy}
On any CLTS, a strongly fair path is always weakly fair, a weakly fair path is always \J-fair,
a \J-fair path---under fairness of components---is always just, and a just path is always progressing.
\end{proposition}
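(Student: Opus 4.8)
The plan is to prove the statement as a chain of four implications, three of which follow by unfolding definitions while the step from \J-fairness to justness carries the real content. For \emph{strongly fair $\Rightarrow$ weakly fair}, note that a perpetually enabled task (enabled in every state of a suffix $\pi'$) is a fortiori relentlessly enabled on $\pi'$, so strong fairness forces it to occur. For \emph{weakly fair $\Rightarrow$ \J-fair}, continuous enabledness (Definition~\ref{df:J-fair}) includes enabledness in every state, hence perpetual enabledness, so weak fairness applies. The step \emph{just $\Rightarrow$ progressing} is immediate by contradiction: infinite paths are progressing by definition, and if a finite path had a last state $s$ that were the source of some $t\in\Tr_{\neg B}$, justness would demand a transition past $s$ interfering with $t$, contradicting that $s$ is last.

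The substantive step is \emph{\J-fair $\Rightarrow$ just} under fairness of components, which I would prove by contraposition. Suppose $\pi$ is not just: there is a non-blocking $t\in\Tr_{\neg B}$ with $s:=\source(t)$ on $\pi$ such that every transition occurring past that occurrence of $s$ is concurrent with $t$. Let $\pi'$ be the suffix of $\pi$ from $s$, choose any $\sigma\in\comp(t)$ (possible since $\comp$ takes values in $\Pow(\Ce)\setminus\emptyset$), and take the component task $T_\sigma=\{v\in\Tr\mid \sigma\in\comp(v)\}$. Then $T_\sigma$ cannot occur on $\pi'$, because every transition $v$ on $\pi'$ satisfies $\comp(t)\cap\comp(v)=\emptyset$ and hence $\sigma\notin\comp(v)$. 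It remains to show $T_\sigma$ is \emph{continuously} enabled on $\pi'$, which will contradict \J-fairness.

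For this I would show, by induction along $\pi'=s_0\,t_1\,s_1\,t_2\cdots$ with $s_0=s$, that each state $s_i$ is the source of a transition $t^{(i)}$ with $\ell(t^{(i)})=\ell(t)$ and $\comp(t^{(i)})=\comp(t)$. Taking $t^{(0)}:=t$ handles the base case; in the inductive step, $t^{(i)}$ and $t_{i+1}$ share the source $s_i$ and are concurrent (as $\comp(t^{(i)})=\comp(t)$ is disjoint from $\comp(t_{i+1})$ by hypothesis), so the non-interference property~(\ref{noninterference}) supplies a transition $t^{(i+1)}$ at $s_{i+1}$ with the same label and component set. Since $\ell(t^{(i)})=\ell(t)\notin B$ and $\sigma\in\comp(t^{(i)})$, each $t^{(i)}$ shows $T_\sigma$ enabled at $s_i$; and because $t^{(i)}$ shares $s_i$ with $t_{i+1}$ and is concurrent with it, it also shows $T_\sigma$ enabled \emph{during} $t_{i+1}$ in the sense of Definition~\ref{df:J-fair}. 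Thus $T_\sigma$ is continuously enabled on $\pi'$ yet never occurs, the desired contradiction.

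The main obstacle is exactly this inductive propagation: one must keep a witnessing transition alive at every state of the suffix and simultaneously discharge the during-transition clause, both of which rely essentially on~(\ref{noninterference})---the only place where CLTS structure beyond bookkeeping enters the argument. I would be careful to verify both clauses of continuous enabledness (at every state, and during every transition) with the same family $t^{(i)}$, since it is precisely the during-transition requirement that separates \J-fairness from weak fairness and that makes this implication nontrivial.
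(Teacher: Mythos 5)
Your proposal is correct and follows essentially the same route as the paper's own proof: the three easy implications are discharged directly from the definitions, and the substantive step (\J-fair $\Rightarrow$ just under fairness of components) is proved by contradiction/contraposition, picking $\sigma\in\comp(t)$ and using property~(\ref{noninterference}) to propagate a witnessing transition with component set $\comp(t)$ along the suffix, thereby establishing that $T_\sigma$ is continuously enabled (in every state \emph{and} during every transition) yet never occurs. Your explicit induction merely spells out what the paper leaves implicit in its ``for each transition $v$ occurring in $\pi$, a transition $u_v$ \dots is enabled right after $v$'' step.
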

\begin{proof}
W.l.o.g.\ let $\pi$ be a \J-fair path under fairness of components, such that a transition $t$ is
enabled in its first state.\footnote{The general case that $s=\source(t)$ occurs in $\pi$ can be reduced to
this special case by taking the suffix of $\pi$ starting at $s$.}
Assume, towards a contradiction, that $\pi$ contains no transition $w$
with $\comp(t) \cap \comp(w)\neq\emptyset$. Let $\sigma\in\comp(t)$. 
Then, using (\ref{noninterference}), for each transition $v$ occurring in $\pi$,
a transition $u_v$ with $\comp(u_v)=\comp(t)\ni\sigma$ is enabled right after $v$.
Note that $u_v\in T_\sigma$. So the task $T_\sigma$ is enabled in each state of $\pi$.
Since $u_v \smile w$ for each transition $w$ in $\pi$,
the task $T_\sigma$ is also enabled during each transition of $\pi$.
So, by $\J$-fairness, the task $T_\sigma$ must occur in $\pi$, contradicting the assumption.
It follows that $\pi$ is just.

The other three statements of Proposition~\ref{hierarchy} follow immediately from the definitions.
\end{proof}
In fact, all implications of Proposition~\ref{hierarchy} are strict.
Counterexamples against their reverses appear in \cite[Examples 3, 20, 12 and 21]{GH18}.%
\footnote{In \cite{GH15a} is it shown that the concept of justice from \cite{LPS81},
  when interpreting ``continuously enabled'' as in \df{J-fair}, and when translating the notion of
  ``transition'' from \cite{LPS81} by ``abstract transition'' for CCS, as defined in \cite{GH15a},
  coincides with justness. Here an ``abstract transition'' is an equivalence class of
  transitions where two transitions are equivalent if they merely differ in the internal state of a
  component not involved in these transitions. The two $a$-labelled transitions in the CCS process
  $a|b$ for instance are deemed equivalent, because they differ only in the state of the $b$-component.

  This result appears at odds with the strictness of the inclusions of Proposition~\ref{hierarchy}.
  But it is not, for Proposition~\ref{hierarchy} applies to a translation of ``transition'' from
  \cite{LPS81} by ``component'' in CCS, which upon closer inspection of \cite{LPS81} is more accurate.
  A component can be seen as an even larger equivalence class of transitions than an abstract transition.}

\paragraph{Acknowledgement}

This paper benefited greatly from the insightful comments of three referees.

\bibliographystyle{eptcsalphaini}
\newcommand{\etalchar}[1]{$^{#1}$}

\end{document}